    \newtheorem{theorem}{Theorem}
    \newtheorem{proposition}[theorem]{Proposition}
    \newtheorem{corollary}[theorem]{Corollary}
    \newtheorem{remark}[theorem]{Remark}
    \newtheorem{lemma}[theorem]{Lemma}
    \newtheorem{example}[theorem]{Example}
\newcommand{\plex}{\prec_\mathrm{lex}}
\newcommand{\rem}{\ \mathrm{rem}\ }
\newcommand{\F}{\mathbf{F}}
\renewcommand{\tilde}[1]{\widetilde{#1}}
\begin{document}

\title{On Field Size and Success Probability in Network Coding}

\author{Olav Geil%
\thanks{Department of Mathematical Sciences,
Aalborg University, Denmark, \protect\url{olav@math.aau.dk}},
Ryutaroh Matsumoto%
\thanks{Department of Communications and Integrated Systems,
Tokyo Institute of Technology, Japan
\protect\url{ryutaroh@rmatsumoto.org}},
Casper Thomsen%
\thanks{Department of Mathematical Sciences,
Aalborg University, Denmark,
\protect\url{caspert@math.aau.dk}}}

\maketitle  

\begin{abstract}
    Using tools from algebraic geometry and Gr\"obner basis theory we solve two
    problems in network coding. First we present a method to determine the
    smallest field size for which linear network coding is feasible. Second we
    derive improved estimates on the success probability of random linear
    network coding. These estimates take into account which monomials occur in
    the support of the determinant of the product of Edmonds matrices. Therefore
    we finally investigate which monomials can occur in the determinant of the
    Edmonds matrix.\\

    \textbf{Keywords.} Distributed networking, linear network coding, multicast,
    network coding, random network coding.
\end{abstract}

\section{Introduction}\label{secintro}

In a traditional data network, an intermediate node only forwards data and never
modifies them.  Ahlswede et~al.~\cite{ahlswede00} showed that if we allow
intermediate nodes to process their incoming data and output modified versions
of them then maximum throughput can increase, and they also showed that the
maximum throughput is given by the minimum of maxflows between the source node
and a sink node for single source multicast on an acyclic directional network.
Such processing is called network coding.  Li et~al.~\cite{li03} showed that
computation of linear combinations over a finite field by intermediate nodes is
enough for achieving the maximum throughput.  Network coding only involving
linear combinations is called linear network coding.  The acyclic assumption
was later removed by Koetter and M\'edard \cite{koetter03}.

In this paper we shall concentrate on the error-free, delay-free multisource
multicast network connection problem where the sources are uncorrelated.
However, the proposed methods described can be generalized to deal with delays
as in \cite{ho06}. The only exception is the description in
Section~\ref{secpath}. 

Considering multicast, it is important to decide whether or not all receivers
(called sinks) can recover all the transmitted information from the senders
(called sources).  It is also important to decide the minimum size $q$ of the
finite field $\F_q$ required for linear network coding.

Before using linear network coding we have to decide coefficients in linear
combinations computed by intermediate nodes. When the size $q$ of a finite field
is large, it is shown that random choice of coefficients allows all sinks to
recover the original transmitted information with high probability \cite{ho06}.
Such a method is called random linear network coding and the probability is
called success probability. As to random linear network coding the estimation or
determination of the success probability is very important.  Ho
et~al.~\cite{ho06} gave a lower bound on the success probability.

In their paper \cite{koetter03}, Koetter and M\'edard introduced an algebraic
geometric point view on network coding.  As explained in \cite{bn:cox},
computational problems in algebraic geometry can often be solved by Gr\"obner
bases. In this paper, we shall show that the exact computation of the minimum
$q$ can be made by applying the division algorithm for multivariate polynomials,
and we will show that improved estimates for the success probability can be
found by applying the footprint bound from Gr\"obner basis theory. These results
introduce a new approach to network coding study. As the improved estimates take
into account which monomials occur in the support of the determinant of a
certain matrix \cite{ho06} we study this matrix in details at the end of the
paper.

\section{Preliminary}\label{secpre}

We can determine whether or not all sinks can recover all the transmitted
information by the determinant of some matrix \cite{ho06}.  We shall review the
definition of such determinant.  Let $G=(V,E)$ be an directed acyclic graph with
possible parallel edges that represents the network topology.  The set of source
and sink nodes is denoted by $S$ and $T$ respectively. Assume that the source
nodes $S$ together get $h$ symbols in $\F_q$ per unit time and try to send them.

Identify the edges in $E$ with the integers $1, \ldots, |E|$. For an edge
$j=(u,v)$ we write $\text{head}(j)=v$ and $\text{tail}(j)=u$. We define the $|E|
\times |E|$ matrix $F=( f_{i,j})$ where  $f_{i,j}$ is a variable if
$\text{head}(i)=\text{tail}(j)$ and $f_{i,j}=0$ otherwise. The variable
$f_{i,j}$ is the coding coefficient from $i$ to $j$. 

Index $h$ symbols in $\F_q$ sent by $S$ by $1, \ldots, h$.  We also define an $h
\times |E|$ matrix $A=(a_{i,j})$ where $a_{i,j}$ is a variable if the edge $j$
is an outgoing edge from the source $s \in S$ sending the $i$-th symbol and
$a_{i,j}=0$ otherwise. Variables $a_{i,j}$ represent how the source nodes send
information to their outgoing edges.

Let $X(l)$ denote the $l$-th symbol generated by the sources $S$, and let $Y(j)$
denote the information sent along edge $j$.  The model is described by the
following relation
\begin{equation*}
    Y(j)
    = \sum_{i=1}^h a_{i,j} X(i)+\sum_{i: \text{head}(i)=\text{tail}(j)}
        f_{i,j}Y(i)
    .
\end{equation*}

For each sink $t \in T$ define an $h \times |E|$ matrix $B_t$ whose $(i,j)$
entry $b_{t,i,j}$ is a variable if $\text{head}(j)=t$ and equals $0$ otherwise.
The index $i$ refers to the $i$-th symbol sent by one of the sources.  Thereby
variables $b_{t,i,j}$ represent how the sink $t$ process the received data from
its incoming edges.

The sink $t$ records the vector
\begin{equation*}
    \vec{b^{(t)}} = \big( b_1^{(t)}, \ldots , b_h^{(t)} \big)
\end{equation*}
where
\begin{equation*}
    b_i^{(t)} = \sum_{j: \text{head}(j)=t} b_{t,i,j}Y(j)
    .
\end{equation*}
We now recall from~\cite{ho06} under which conditions all informations sent by
the sources can always be recovered at all sinks. As in \cite{ho06} we define
the Edmonds matrix $M_t$ for $t\in T$ by
\begin{equation}
    \label{eqMt}
    M_t =
    \begin{pmatrix}
        A & 0 \\
        I-F & B_t^T
    \end{pmatrix}
    .
\end{equation}
Define the polynomial $P$ by
\begin{equation}
    \label{eqP}
    P = \prod_{t\in T} |M_t|
    .
\end{equation}
$P$ is a multivariate polynomial in variables $f_{i,j}$, $a_{i,j}$ and
$b_{t,i,j}$. Assigning a value in $\F_q$ to each variable corresponds to
choosing a coding scheme. Plugging the assigned values into $P$ gives an element
$k \in \F_q$. The following theorem from~\cite{ho06} tells us when the coding
scheme can be used to always recover the information generated at the sources
$S$ at all sinks in $T$.

\begin{theorem}
    Let the notation and the network coding model be as above. Assume a coding
    scheme has been chosen by assigning values to the variables $f_{i,j}$,
    $a_{i,j}$ and $b_{t,i,j}$. Let $k$ be the value found by plugging the
    assigned values into $P$. Every sink $t \in T$ can recover from
    $\vec{b^{(t)}}$ the informations $X(1), \ldots , X(h)$ no matter what they
    are, if and only if $k \neq 0$ holds.
\end{theorem}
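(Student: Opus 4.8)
The plan is to reduce the statement to elementary linear algebra by passing from the Edmonds matrix $M_t$ to an $h\times h$ ``transfer matrix'' at each sink. First I would collect the source symbols into a row vector $\vec X=(X(1),\dots,X(h))$ and the edge symbols into $\vec Y=(Y(1),\dots,Y(|E|))$, so that the defining relation of the model becomes the single matrix equation $\vec Y=\vec XA+\vec YF$, that is, $\vec Y(I-F)=\vec XA$. Since $G$ is acyclic its edges can be listed in an order for which $\text{head}(i)=\text{tail}(j)$ forces $i<j$; with respect to such an order $F$ is strictly upper triangular, so $I-F$ is upper triangular with ones on the diagonal. Hence $\det(I-F)=1$ and $(I-F)^{-1}=\mathrm{adj}(I-F)$ has entries that are polynomials in the $f_{i,j}$. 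Therefore $\vec Y=\vec XA(I-F)^{-1}$, and since the record of sink $t$ is $\vec{b^{(t)}}=\vec YB_t^T$ in matrix form, we obtain $\vec{b^{(t)}}=\vec X\,M^{(t)}$ with $M^{(t)}:=A(I-F)^{-1}B_t^T$ an $h\times h$ matrix of polynomials. All of this remains valid after the variables have been assigned values in $\F_q$, because the substituted $F$ keeps the same strictly-upper-triangular zero pattern and hence $I-F$ stays invertible.

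Next I would argue that, for a fixed coding scheme, sink $t$ can recover every $\vec X\in\F_q^h$ from $\vec{b^{(t)}}$ precisely when the $\F_q$-linear endomorphism $\vec X\mapsto\vec X M^{(t)}$ of $\F_q^h$ is injective, hence bijective, hence precisely when $\det M^{(t)}\neq 0$ in $\F_q$. Consequently every sink can recover all the transmitted information no matter what it is if and only if $\prod_{t\in T}\det M^{(t)}\neq 0$.

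Finally I would relate $\det M^{(t)}$ to $\det M_t=|M_t|$. Multiplying $M_t$ on the right by the block-diagonal matrix with diagonal blocks $(I-F)^{-1}$ and $I_h$ (whose determinant is $1$) replaces the block $A$ by $A(I-F)^{-1}$ and the block $I-F$ by $I$; row operations that use this identity block to annihilate the upper-left block, followed by a swap of the two column strips, then exhibit a block-triangular matrix and give $\det M_t=\pm\det M^{(t)}$. This is an identity of polynomials, so it survives substitution. Hence the value $k$ of $P=\prod_{t\in T}|M_t|$ at the chosen scheme vanishes if and only if some $\det M^{(t)}$ vanishes there, which by the previous paragraph happens exactly when some sink fails to recover all the information, no matter what it is; this proves both directions. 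The steps that I expect to need the most care are the topological-ordering argument that makes $I-F$ invertible and the sign bookkeeping in the block-determinant identity; the Schur-complement manipulation and the equivalence between recoverability and invertibility of $M^{(t)}$ are otherwise routine.
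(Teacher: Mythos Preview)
Your argument is correct and is essentially the standard proof that the paper defers to~\cite{ho06}: one shows $\vec{b^{(t)}}=\vec X\,A(I-F)^{-1}B_t^T$ using the acyclicity to guarantee that $I-F$ is invertible, identifies recoverability at $t$ with invertibility of the $h\times h$ transfer matrix $A(I-F)^{-1}B_t^T$, and then uses a Schur-complement identity together with $\det(I-F)=1$ to obtain $\det M_t=\pm\det\big(A(I-F)^{-1}B_t^T\big)$. The paper itself gives no independent proof, so there is nothing further to compare; your sign and block-manipulation bookkeeping is fine, since only the vanishing or nonvanishing of $k$ matters.
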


\begin{proof}
    See~\cite{ho06}. 
\end{proof}

\section{Computation of the Minimum Field Size}\label{secfieldsize}

We shall study computation of the minimum symbol size $q$. For this purpose we
will need the division algorithm for multivariate polynomials
\cite[Sec.~2.3]{bn:cox} to produce the remainder of a polynomial $F(X_1, \ldots
, X_n)$ modulo $(X_1^q-X_1, \ldots, X_n^q-X_n)$ (this remainder is independent
of the choice of monomial ordering). We adapt the standard notation for the
above remainder which is
\begin{equation*}
    F(X_1, \ldots, X_n) \rem (X_1^q-X_1, \ldots, X_n^q-X_n)
    .
\end{equation*}
The reader unfamiliar with the division algorithm can think of the above
remainder of $F(X_1, \ldots, X_n)$ as the polynomial produced by the following
procedure. As long as we can find an $X_i$ such that $X_i^q$ divides some term
in the polynomial under consideration we replace the factor $X_i^q$ with $X_i$
wherever it occurs. The process continues until the $X_i$-degree is less than
$q$ for all $i=1, \ldots, n$. It is clear that the above procedure can be
efficiently implemented.

\begin{proposition}\label{pro1}
    Let $F(X_1$, \ldots, $X_n)$ be an $n$-variate polynomial over $\F_q$.  There
    exists an $n$-tuple $(x_1, \ldots, x_n) \in \F_q^n$ such that $F(x_1,
    \ldots, x_n) \neq 0$ if and only if
    \begin{equation*}
        F(X_1, \ldots, X_n) \rem (X_1^q-X_1, \ldots, X_n^q-X_n)
        \neq 0
        .
    \end{equation*}
\end{proposition}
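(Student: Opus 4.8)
The plan is to show that the polynomial $F$ and its remainder $G := F \rem (X_1^q - X_1, \ldots, X_n^q - X_n)$ define the same function $\F_q^n \to \F_q$, after which the proposition follows immediately: $F$ is nonzero at some point exactly when $G$ is, and since $G$ has $X_i$-degree less than $q$ in every variable, $G$ induces the zero function if and only if $G$ is the zero polynomial. The first claim is the substantive part; the second is the standard fact that a reduced polynomial vanishing on all of $\F_q^n$ must be identically zero, which I would justify by a one-variable Lagrange-interpolation / Vandermonde argument iterated over the $n$ variables, or simply by citing that the monomials $X_1^{e_1}\cdots X_n^{e_n}$ with $0 \le e_i < q$ form a basis for the space of functions $\F_q^n \to \F_q$.

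For the first claim, I would argue directly from the division algorithm. By definition of the remainder there exist polynomials $q_1, \ldots, q_n$ such that
\begin{equation*}
    F = q_1 (X_1^q - X_1) + \cdots + q_n (X_n^q - X_n) + G .
\end{equation*}
Now for any $(x_1, \ldots, x_n) \in \F_q^n$ we have $x_i^q - x_i = 0$ for each $i$, because every element of $\F_q$ is a root of $X^q - X$. Substituting these values into the displayed identity kills all the quotient terms and yields $F(x_1, \ldots, x_n) = G(x_1, \ldots, x_n)$. Hence $F$ and $G$ agree as functions on $\F_q^n$. Equivalently, one can phrase this via the description given before the proposition: each elementary reduction step replaces a factor $X_i^q$ by $X_i$, and since $x_i^q = x_i$ in $\F_q$ this does not change the value of the polynomial at any point of $\F_q^n$; as the procedure terminates after finitely many such steps, the final polynomial $G$ agrees with $F$ everywhere on $\F_q^n$.

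It remains to combine the pieces. If $G \neq 0$ as a polynomial, then since every $X_i$-degree of $G$ is strictly less than $q$, the basis fact quoted above (equivalently: a nonzero univariate polynomial of degree $< q$ has at most $q-1$ roots in $\F_q$, applied inductively) guarantees some $(x_1,\ldots,x_n) \in \F_q^n$ with $G(x_1,\ldots,x_n) \neq 0$, and then $F(x_1,\ldots,x_n) = G(x_1,\ldots,x_n) \neq 0$. Conversely, if $G = 0$ then $F(x_1,\ldots,x_n) = G(x_1,\ldots,x_n) = 0$ for all $(x_1,\ldots,x_n) \in \F_q^n$. This establishes both directions. I expect the only mild obstacle to be stating the ``reduced polynomial vanishing identically on $\F_q^n$ is zero'' step cleanly; everything else is a direct substitution using $x^q = x$ in $\F_q$.
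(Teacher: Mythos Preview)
Your proposal is correct and follows essentially the same route as the paper: both first observe that $F$ and its remainder agree as functions on $\F_q^n$ (using $a^q=a$), and then argue that a nonzero reduced polynomial cannot vanish on all of $\F_q^n$ via the one-variable degree bound applied inductively. The only cosmetic difference is that you phrase the first step through the explicit division identity $F=\sum q_i(X_i^q-X_i)+G$, whereas the paper appeals directly to the reduction description; the inductive second step is identical.
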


\begin{proof}
    As $a^q=a$ for all $a \in \F_q$ it holds that  $F(X_1, \ldots, X_n)$
    evaluates to the same as $R(X_1, \ldots, X_n):=F(X_1, \ldots, X_n) \rem
    (X_1^q-X_1, \ldots, X_n^q-X_n)$ in every $(x_1, \ldots, x_n) \in \F_q^n$.
    If  $R(X_1, \ldots, X_n)=0$ therefore $F(X_1, \ldots, X_n)$ evaluates to
    zero for every choice of $(x_1, \ldots, x_n) \in \F_q^n$. If $R(X_1, \ldots,
    X_n)$ is nonzero we consider it first as a polynomial in $\F_q(X_1, \ldots,
    X_{n-1})[X_n]$ (that is, a polynomial in one variable over the quotient
    field $\F_q(X_1, \ldots, X_{n-1})$). But the $X_n$-degree is at most $q-1$
    and therefore it has at most $q-1$ zeros. We conclude that there exists an
    $x_n\in \F_q$ such that $R(X_1, \ldots, X_{n-1},x_n) \in \F_q[X_1, \ldots,
    X_{n-1}]$ is nonzero. Continuing this way we find $(x_1, \ldots, x_n)$ such
    that $R(x_1, \ldots, x_n)$ and therefore also $F(x_1, \ldots, x_n)$ is
    nonzero.
\end{proof}

From~\cite[Th.\ 2]{ho06} we know that for all prime powers $q$ greater than
$|T|$ linear network coding is possible. It is now straightforward to describe
an algorithm that finds the smallest field $\F_q$ of prescribed characteristic
$p$ for which linear network coding is feasible. We first reduce the polynomial
$P$ from~\eqref{eqP} modulo the prime $p$.  We observe that although $P$ is a
polynomial in all the variables $a_{i,j}$, $b_{t,i,j}$, $f_{i,j}$ the variable
$b_{t,i,j}$ appears at most in powers of $1$. This is so as it appears at most
in a single entry in $M_t$ and does not appear elsewhere.  Therefore $\F_q$ can
be used for network coding if $P \rem p$ does not reduce to zero modulo the
polynomials $a_{i,j}^q-a_{i,j}$, $f_{i,j}^q-f_{i,j}$. To decide the smallest
field $\F_q$ of characteristic $p$ for which network coding is feasible we try
first $\F_q=\F_p$. If this does not work we then try $\F_{p^2}$ and so on. To
find an $\F_q$ that works we need at most to try $\lfloor \log_p(|T|)\rfloor$
different fields as we know that linear network coding is possible whenever $q >
|T|$.

Note that once a field $\F_q$ is found such that the network connection problem
is feasible the last part of the proof of Proposition 1 describes a simple way
of deciding coefficients $(x_1, \ldots, x_n) \in \F_q^n$ that can be used for
network coding. 

From~\cite[Sec.~7.1.3]{frasol} we know that it is an NP-hard problem to find the
minimum field size for linear network coding. Our findings imply that it is
NP-hard to find the polynomial $P$ in~\eqref{eqP}.

\section{Computation of the Success Probability of Random Linear Network
Coding}\label{secrandom}

In random linear network coding we from the beginning fix for a collection
\begin{equation*}
    K \subseteq \{1, \ldots, h\} \times \{1, \ldots, | E | \}
\end{equation*}
the $a_{i,j}$'s with $(i,j) \in K$ and also we fix  for a collection
\begin{equation*}
    J \subseteq \{1, \ldots, | E| \} \times \{1, \ldots,|E| \}
\end{equation*}
the $f_{i,j}$'s with $(i,j) \in J$. This is done in a way such that there exists
a solution to the network connection problem with the same values for these
fixed coefficients. A priori of course we let $a_{i,j}=0$ if the edge $j$ is not
emerging from the source sending information $i$, and also a priori we of course
let $f_{i,j}=0$ if $j$ is not an adjacent downstream edge of $i$.  Besides these
a priori fixed values there may be good reasons for also fixing other
coefficients $a_{i,j}$ and $f_{i,j}$ \cite{ho06}. If for example there is only
one upstream edge $i$ adjacent to $j$ we may assume $f_{i,j}=1$.  All the
$a_{i,j}$'s and $f_{i,j}$'s which have not been fixed at this point are then
chosen randomly and independently. All coefficients are to be elements in
$\F_q$. If a solution to the network connection problem exists with the
$a_{i,j}$'s and the $f_{i,j}$'s specified,  it is possible to determine values
of $b_{t,i,j}$ at the sinks such that a solution to the network connection
problem is given. Let $\mu$ be the number of variables $a_{i,j}$ and $f_{i,j}$
chosen randomly. Call these variables $X_1, \ldots, X_\mu$. Consider the
polynomial $P$ in~\eqref{eqP} and let $\tilde{P}$ be the polynomial made from
$P$  by plugging in the fixed values of the $a_{i,j}$'s and the fixed values of
the $f_{i,j}$'s (calculations taking place in $\F_q$). Then $\tilde{P}$ is a
polynomial in $X_1, \ldots, X_\mu$. The coefficients of $\tilde{P}$ are
polynomials in the $b_{t,i,j}$'s over $\F_q$. Finally, define
\begin{equation*}
    \widehat{P} := \tilde{P} \rem (X_1^q-X_1, \ldots, X_\mu^q-X_\mu)
    .
\end{equation*}
The success probability of random linear network coding is the probability that
the random choice of coefficients will lead to a solution of the network
connection problem\footnote{This corresponds to saying that each sink can
recover the data at the maximum rate promised by network coding.} as in
Section~\ref{secpre}. That is, the probability is the number 
\begin{align}
    \nonumber
    &\frac{| \{ (x_1, \ldots, x_\mu) \in \F_q^\mu | \tilde{P}(x_1, \ldots,
        x_\mu) \neq 0 \} |}{q^\mu}
    \\
    \label{eqprob2}
    &=\frac{| \{ (x_1, \ldots, x_\mu) \in \F_q^\mu | \widehat{P}(x_1, \ldots,
        x_\mu) \neq 0 \} |}{q^\mu}
    .
\end{align}

To see the first result observe that for fixed $(x_1, \ldots, x_\mu) \in
\F_q^\mu$, $\tilde{P}(x_1, \ldots, x_\mu)$ can be viewed as a polynomial in the
variables $b_{t,i,j}$'s with coefficients in $\F_q$ and recall that the
$b_{t,i,j}$'s occur in powers of at most $1$. Therefore, if $\tilde{P}(x_1,
\ldots, x_\mu) \neq 0$, then by Proposition~\ref{pro1} it is possible to choose
the $b_{t,i,j}$'s such that if we plug them into $\tilde{P}(x_1, \ldots, x_\mu)$
then we get nonzero.  The last result follows from the fact that $\tilde{P}(x_1,
\ldots, x_\mu) = \widehat{P}(x_1, \ldots, x_\mu)$ for all $(x_1, \ldots, x_\mu)
\in \F_q^\mu$. In this section we shall present a method to estimate the success
probability using Gr\"obner basis theoretical methods. 

We briefly review some basic definitions and results of Gr\"obner bases.  See
\cite{bn:cox} for a more detailed exposition.  Let $\mathcal{M}(X_1, \ldots,
X_n)$ be the set of monomials in the variables $X_1, \ldots, X_n$.  A monomial
ordering $\prec$ is a total ordering on $\mathcal{M}(X_1, \ldots, X_n)$ such
that
\begin{equation*}
    L \prec M \Longrightarrow LN \prec MN
\end{equation*}
holds for all monomials $L$, $M$, $N \in \mathcal{M}(X_1, \ldots, X_n)$ and such
that every nonempty subset of $\mathcal{M}(X_1, \ldots, X_n)$ has a unique
smallest element with respect to $\prec$.  The leading monomial of a polynomial
$F$ with respect to $\prec$, denoted by $\textsc{lm}(F)$, is the largest
monomial in the support of $F$. Given a polynomial ideal $I$ and a monomial
ordering the footprint $\Delta_{\prec}(I)$ is the set of monomials that cannot
be found as leading monomials of any polynomial in $I$. The following
proposition explains our interest in the footprint (for a proof of the
proposition see~\cite[Pro.~8.32]{beckerweis}).

\begin{proposition}\label{profootgen}
    Let $\F$ be a field and consider the polynomials $F_1, \ldots, F_s \in
    \F[X_1, \ldots, X_n]$. Let $I=\langle F_1, \ldots, F_s \rangle \subseteq
    \F[X_1, \ldots, X_n]$ be the ideal generated by $F_1, \ldots, F_s$. If
    $\Delta_{\prec}(I)$ is finite then the number of common zeros of $F_1,
    \ldots, F_s$ in the algebraic closure of $\F$ is at most equal to $|
    \Delta_{\prec}(I)|$. 
\end{proposition}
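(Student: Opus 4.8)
The plan is to prove Proposition~\ref{profootgen} by reducing the counting of common zeros to a dimension count of a certain quotient ring, using the footprint as an explicit monomial basis. First I would recall the standard fact that the set of common zeros of $F_1, \ldots, F_s$ in the algebraic closure $\overline{\F}$ depends only on the ideal $I = \langle F_1, \ldots, F_s\rangle$, and moreover that if $I$ has only finitely many common zeros then the coordinate ring $\F[X_1,\ldots,X_n]/I$ is finite-dimensional as an $\F$-vector space; more precisely, the number of common zeros is bounded above by $\dim_{\F} \F[X_1,\ldots,X_n]/I$. (This is where the hypothesis matters: I would show that finiteness of $\Delta_\prec(I)$ forces $\dim_{\F} \F[X_1,\ldots,X_n]/I$ to be finite, and the zero set to be finite.)

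The key step is then to establish that the residue classes of the monomials in $\Delta_\prec(I)$ form a basis of $\F[X_1,\ldots,X_n]/I$ over $\F$, so that $\dim_{\F}\F[X_1,\ldots,X_n]/I = |\Delta_\prec(I)|$. For spanning, I would invoke the division algorithm for multivariate polynomials with respect to $\prec$ against a Gr\"obner basis $\{G_1,\ldots,G_r\}$ of $I$: dividing any $H \in \F[X_1,\ldots,X_n]$ yields a remainder whose every monomial lies outside $\langle \mathrm{lm}(G_1),\ldots,\mathrm{lm}(G_r)\rangle$, hence inside $\Delta_\prec(I)$, and $H$ minus this remainder lies in $I$. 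For linear independence, suppose a nontrivial $\F$-linear combination of monomials in $\Delta_\prec(I)$ lies in $I$; its leading monomial would then be a leading monomial of an element of $I$, contradicting the defining property of the footprint. Combining spanning and independence gives the dimension equality.

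Finally I would assemble the pieces: the number of common zeros of $F_1,\ldots,F_s$ in $\overline{\F}$ is at most $\dim_{\F}\F[X_1,\ldots,X_n]/I$, which equals $|\Delta_\prec(I)|$, yielding the claimed bound. The zero-count-versus-dimension inequality itself can be proved by choosing distinct points $p_1,\ldots,p_N$ in the zero set and constructing, via interpolation, polynomials whose residue classes are linearly independent in the quotient (equivalently, the evaluation map $\F[X_1,\ldots,X_n]/I \to \overline{\F}^N$ is surjective after base change), forcing $N \le \dim$.

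The main obstacle I anticipate is the zero-count-versus-dimension inequality, rather than the footprint-basis statement. The footprint basis is essentially bookkeeping with the division algorithm once a Gr\"obner basis is in hand. But relating the geometric count of points over $\overline{\F}$ to the algebraic dimension $\dim_{\F}\F[X_1,\ldots,X_n]/I$ requires a little care about base change from $\F$ to $\overline{\F}$ (note $\Delta_\prec(I)$ and $\dim$ are insensitive to this base extension, since reduced Gr\"obner bases are preserved) and an interpolation argument to separate the points. Given that the paper only needs an upper bound and cites~\cite{beckerweis} for the full proof, I would present the argument at the level of these three steps and refer to the standard references for the routine verifications.
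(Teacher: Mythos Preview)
The paper does not give its own proof of this proposition; it simply refers the reader to \cite[Pro.~8.32]{beckerweis}. Your proposed argument is correct and is essentially the standard route taken in that reference: show that the monomials in $\Delta_\prec(I)$ form an $\F$-basis of $\F[X_1,\ldots,X_n]/I$ (via division by a Gr\"obner basis for spanning, and the definition of the footprint for independence), then bound the number of points in the zero set over $\overline{\F}$ by the vector-space dimension of the quotient (using that a Gr\"obner basis, and hence the footprint, is unchanged under the base change $\F \hookrightarrow \overline{\F}$, together with an interpolation/evaluation-map argument to separate distinct points). There is nothing further to compare.
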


Proposition~\ref{profootgen} is known as the footprint bound. It
has the following corollary.

\begin{corollary}\label{corweakfoot}
    Let $F \in \F[X_1, \ldots,X_n]$ where $\F$ is a field containing $\F_q$.
    Fix a monomial ordering and let 
    \begin{equation*}
        X_1^{j_1}\cdots X_n^{j_n}
        = \textsc{lm}\big( F \rem (X_1^q-X_1, \ldots, X_n^q-X_n) \big)
        .
    \end{equation*}
    The number of zeros of $F$ over $\F_q$ is at most equal to
    \begin{equation}
        \label{eqcirc1}
        q^n-\prod_{v=1}^n(q-j_v)
        .
    \end{equation}
\end{corollary}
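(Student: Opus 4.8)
The plan is to deduce the statement from the footprint bound, Proposition~\ref{profootgen}, applied not to $F$ itself but to the ideal obtained by adjoining the field equations $X_v^q - X_v$ to the reduced polynomial $R := F \rem (X_1^q - X_1, \ldots, X_n^q - X_n)$. (Here I tacitly assume $R \neq 0$, for otherwise its leading monomial is undefined; in that degenerate case $F$ vanishes identically on $\F_q^n$.)

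First I would record that $F$ and $R$ take the same value at every point of $\F_q^n$, because $a^q = a$ for all $a \in \F_q$; hence $F$ and $R$ have the same zero set over $\F_q$. Then I would pass to the ideal $I = \langle R,\, X_1^q - X_1, \ldots, X_n^q - X_n \rangle \subseteq \F[X_1, \ldots, X_n]$ and observe that the set of common zeros of $I$ in the algebraic closure of $\F$ is precisely the zero set of $F$ over $\F_q$: the equations $X_v^q - X_v = 0$ force every coordinate to lie in $\F_q$, and on $\F_q^n$ the condition $R = 0$ is equivalent to $F = 0$. Thus the number of zeros of $F$ over $\F_q$ equals the number of common zeros of $I$, and by Proposition~\ref{profootgen} it suffices to show $|\Delta_{\prec}(I)| \le q^n - \prod_{v=1}^n (q - j_v)$ (which in particular makes $\Delta_{\prec}(I)$ finite, so that the proposition applies).

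The heart of the argument is then a monomial count. In any monomial ordering one has $1 \prec X_v$ (otherwise the well-ordering property fails via an infinite descending chain of powers of $X_v$), hence $X_v \prec X_v^q$, so $\textsc{lm}(X_v^q - X_v) = X_v^q$; and by definition of the remainder, no term of $R$ is divisible by any $X_v^q$, so $\textsc{lm}(R) = X_1^{j_1} \cdots X_n^{j_n}$ with every $j_v \le q - 1$. Since all of $X_1^q, \ldots, X_n^q$ and $X_1^{j_1} \cdots X_n^{j_n}$ are leading monomials of elements of $I$, any monomial in $\Delta_{\prec}(I)$ is divisible by none of them. There are exactly $q^n$ monomials $X_1^{a_1} \cdots X_n^{a_n}$ with $0 \le a_v \le q - 1$ for all $v$, and of these precisely $\prod_{v=1}^n (q - j_v)$ are divisible by $X_1^{j_1} \cdots X_n^{j_n}$ (namely those with $j_v \le a_v \le q - 1$). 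Hence $\Delta_{\prec}(I)$ is contained in a set of size $q^n - \prod_{v=1}^n (q - j_v)$, which completes the proof.

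The step that demands the most care is the reduction described in the second paragraph: the footprint bound counts common zeros over the algebraic closure, so adjoining the field equations $X_v^q - X_v$ is essential — only these cut the variety of $I$ down to the finitely many $\F_q$-rational zeros of $F$, and only their leading terms $X_v^q$ confine $\Delta_{\prec}(I)$ to a finite box. Everything after that is a routine count of lattice points in $\{0, \ldots, q-1\}^n$. It is worth noting explicitly that the estimate uses only $\textsc{lm}(R)$ and not the whole polynomial $R$, so it is in general weaker than the bound one would get from the full footprint of $I$; this is the sense in which it is the ``weak'' footprint bound.
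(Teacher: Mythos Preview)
Your proof is correct and follows essentially the same route as the paper: adjoin the field equations, observe that $\textsc{lm}(R)$ and the $X_v^q$ are leading monomials of elements of the ideal, and count the monomials in the box $\{0,\ldots,q-1\}^n$ not divisible by $\textsc{lm}(R)$. The paper phrases the containment as $\Delta_{\prec}(\langle F, X_1^q-X_1,\ldots\rangle)\subseteq \Delta_{\prec}(\langle \textsc{lm}(R),X_1^q,\ldots,X_n^q\rangle)$ and notes the latter set has size $q^n-\prod_v(q-j_v)$; your version unpacks this more explicitly (including why $\textsc{lm}(X_v^q-X_v)=X_v^q$ and why the common zeros over the closure are exactly the $\F_q$-zeros of $F$), but the argument is the same.
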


\begin{proof}
    We have
    \begin{multline*}
        \Delta_{\prec}(\langle F, X_1^q-X_1, \ldots, X_n^q-X_n \rangle )
        \\
        \subseteq \Delta_{\prec}(\langle \textsc{lm}(F \rem (X_1^q-X_1,
        \ldots,X_n^q-X_n)),X_1^q, \ldots, X_n^q \rangle )
    \end{multline*}
    and the size of the latter set equals~\eqref{eqcirc1}. The result now
    follows immediately from Proposition~\ref{profootgen}.
\end{proof}

\begin{theorem}\label{theestimate}
    Let as above $\tilde{P}$ be found by plugging into $P$ some fixed values for
    the variables $a_{i,j}$, $(i,j) \in K$, and by plugging into $P$ some fixed
    values for the variables $f_{i,j}$, $(i,j) \in J$, and by leaving the
    remaining $\mu$ variables flexible. Assume as above that there exists a
    solution to the network connection problem with the same values for these
    fixed coefficients. Denote by $X_1, \ldots, X_\mu$ the variables to be
    chosen by random and define $\widehat{P}:= \tilde{P} \rem (X_1^q-X_1,
    \ldots, X_\mu^q-X_\mu)$. (Note that if $q > |T|$ then
    $\widehat{P}=\tilde{P}$). Consider $\widehat{P}$ as a polynomial in the
    variables $X_1, \ldots, X_\mu$  and let $\prec$ be any fixed monomial
    ordering. Writing $X_1^{j_1}\cdots X_\mu^{j_\mu} = \textsc{lm}(\widehat{P})$
    the success probability is at least
    \begin{equation}
        \label{eqtrio}
        q^{- \mu} \prod_{v=1}^\mu (q-j_v)
        .
    \end{equation}
    As a consequence the success probability is in particular at least   
    \begin{equation}
        \label{eqtri}
        q^{- \mu} \min \bigg\{
        \prod_{i=1}^\mu (q-s_i)
        \bigg|
        X_1^{s_1}\cdots X_\mu^{s_\mu} \text{ is a monomial in the
        support of } \widehat{P}
        \bigg\}
        .
    \end{equation}
\end{theorem}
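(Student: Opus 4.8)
The plan is to deduce everything from the footprint estimate in Corollary~\ref{corweakfoot}; the only genuine subtlety is that $\widehat{P}$ is not a polynomial with coefficients in $\mathbf{F}_q$ but one whose coefficients are themselves polynomials in the variables $b_{t,i,j}$, so one has to invoke the corollary over the right coefficient field. Recall from the paragraph preceding the theorem that the success probability equals $q^{-\mu}$ times the number of tuples $(x_1,\ldots,x_\mu)\in\mathbf{F}_q^\mu$ for which $\widehat{P}(x_1,\ldots,x_\mu)$, regarded as a polynomial in the $b_{t,i,j}$ over $\mathbf{F}_q$, is not identically zero: this uses Proposition~\ref{pro1} applied to the $b_{t,i,j}$ (which occur in powers at most $1$) together with the identity $\tilde{P}(x_1,\ldots,x_\mu)=\widehat{P}(x_1,\ldots,x_\mu)$ on $\mathbf{F}_q^\mu$. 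The hypothesis that a solution exists for the chosen fixed coefficients supplies in particular one point at which $\tilde{P}$ is nonzero, so $\widehat{P}$ is not the zero polynomial and its leading monomial $X_1^{j_1}\cdots X_\mu^{j_\mu}$ is well defined. It therefore suffices to bound by $q^\mu-\prod_{v=1}^\mu(q-j_v)$ the number of tuples in $\mathbf{F}_q^\mu$ at which $\widehat{P}$ \emph{does} vanish.

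First I would pass to the field $\mathbf{F}:=\mathbf{F}_q(\ldots,b_{t,i,j},\ldots)$ of rational functions in the $b$-variables over $\mathbf{F}_q$ and view $\widehat{P}$ as an element of $\mathbf{F}[X_1,\ldots,X_\mu]$: each coefficient of $\widehat{P}$ is a polynomial in the $b_{t,i,j}$, hence an element of $\mathbf{F}$, and it is the zero element of $\mathbf{F}$ exactly when it is the zero polynomial. Consequently the tuples in $\mathbf{F}_q^\mu$ at which $\widehat{P}$ vanishes as a polynomial in the $b$'s are precisely the zeros over $\mathbf{F}_q$ of $\widehat{P}\in\mathbf{F}[X_1,\ldots,X_\mu]$ in the sense of Corollary~\ref{corweakfoot}, and the leading monomial of $\widehat{P}$ as an element of $\mathbf{F}[X_1,\ldots,X_\mu]$ with respect to $\prec$ is the same monomial $X_1^{j_1}\cdots X_\mu^{j_\mu}$ as in the statement. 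Since $\widehat{P}$ is a remainder modulo $(X_1^q-X_1,\ldots,X_\mu^q-X_\mu)$ its $X_i$-degree is at most $q-1$ for every $i$, so $\widehat{P}\rem(X_1^q-X_1,\ldots,X_\mu^q-X_\mu)=\widehat{P}$, and Corollary~\ref{corweakfoot} applies verbatim with $n=\mu$ and $F=\widehat{P}$, giving that $\widehat{P}$ has at most $q^\mu-\prod_{v=1}^\mu(q-j_v)$ zeros over $\mathbf{F}_q$. Hence at least $\prod_{v=1}^\mu(q-j_v)$ of the $q^\mu$ tuples lead to a solution, and dividing by $q^\mu$ yields~\eqref{eqtrio}. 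The place to be careful is exactly this identification of ``nonzero as a polynomial in the $b$'s'' with ``nonzero in the field $\mathbf{F}$''; once that is in place the argument is purely formal.

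Finally, \eqref{eqtri} follows from \eqref{eqtrio} without further work: the leading monomial $X_1^{j_1}\cdots X_\mu^{j_\mu}$ lies in the support of $\widehat{P}$, so $\prod_{v=1}^\mu(q-j_v)$ is one of the numbers over which the minimum in~\eqref{eqtri} is taken, whence that minimum is at most $\prod_{v=1}^\mu(q-j_v)$ and~\eqref{eqtri} is a weakening of~\eqref{eqtrio}. For the parenthetical claim that $\widehat{P}=\tilde{P}$ when $q>|T|$ one observes that each variable $a_{i,j}$, $f_{i,j}$ occurs in at most one entry of each $M_t$, hence has degree at most $|T|$ in $P=\prod_{t\in T}|M_t|$ and therefore in $\tilde{P}$; so when $q>|T|$ no reduction takes place. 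This last point is not needed for the estimate itself.
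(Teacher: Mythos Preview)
Your proof is correct and follows essentially the same route as the paper: apply Corollary~\ref{corweakfoot} over a suitable extension field of $\mathbf{F}_q$ to bound the zeros of $\widehat{P}$, then combine with~\eqref{eqprob2}, and deduce~\eqref{eqtri} from~\eqref{eqtrio} because the leading monomial lies in the support. You are in fact more careful than the paper's own proof, which (apparently by a slip) names the coefficient field $\mathbf{F}_q(X_1,\ldots,X_\mu)$ rather than the rational function field in the $b_{t,i,j}$'s; your identification is the right one.
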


\begin{proof}
    Let $\F$ be the quotient field $\F_q(X_1, \ldots, X_\mu)$. The result
    in~\eqref{eqtrio} now follows by applying Corollary~\ref{corweakfoot}
    and~\eqref{eqprob2}. As the leading monomial of $\tilde{P}$ is of course a
    monomial in the support of $\tilde{P}$ \eqref{eqtri} is smaller or equal
    to~\eqref{eqtrio}.
\end{proof}

\begin{remark}
    The condition in Theorem~\ref{theestimate} that there exists a solution to
    the network connection problem with the coefficients corresponding to $K$
    and $J$ being as specified is equivalent to the condition that $\widehat{P}
    \neq 0$.
\end{remark}

We conclude this section by mentioning without a proof that Gr\"obner basis
theory tells us that the true success probability can be calculated as 
\begin{equation*}
    q^{-\mu} \big( q^\mu-| \Delta_{\prec}(\langle \tilde{P}, X_1^q-X_1, \ldots,
    X_\mu^q-X_\mu \rangle )| \big)
    .
\end{equation*}
This observation is however of little value as it seems very difficult to
compute the footprint  
\begin{equation*}
    \Delta_{\prec}(\langle \tilde{P}, X_1^q-X_1, \ldots, X_\mu^q-X_\mu \rangle)
\end{equation*}
due to the fact that $\mu$ is typically a very high number.

\section{The Bound by Ho et al.}\label{secho}

In~\cite{ho06} Ho et al.\ gave a lower bound on the success probability in terms
of the number of edges $j$ with associated random coefficients\footnote{We state
Ho et al.'s bound only in the case of delay-free acyclic networks.} $\{a_{i,j},
f_{l,j}\}$. Letting $\eta$ be the number of such edges \cite[Th.\ 2]{ho06} tells
us that if $q>|T|$ and if there exists a solution to the network connection
problem with the same values for the fixed coefficients, then the success
probability is at least
\begin{equation}
    \label{eqhosnabel1}
    p_\text{Ho} = \left( \frac{q-|T|}{q} \right)^\eta
    .
\end{equation}
The proof in~\cite{ho06} of~\eqref{eqhosnabel1} relies on two lemmas of which we
only state the first one.

\begin{lemma}\label{lemho3}
    Let $\eta$ be defined as above. The determinant polynomial of $M_t$ has
    maximum degree $\eta$ in the random variables $\{a_{i,j}, f_{l,j} \}$ and is
    linear in each of these variables.
\end{lemma}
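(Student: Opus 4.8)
The plan is to analyze the combinatorial structure of the Edmonds matrix $M_t$ from~\eqref{eqMt} and count, for each random variable, how many entries of $M_t$ it can appear in. Recall that each random variable is of the form $a_{i,j}$ or $f_{l,j}$ where $j$ is a fixed edge with associated randomness; there are $\eta$ such edges, so I would first argue that the total degree claim reduces to a per-edge analysis. The key observation is that $M_t$ has the block form $\begin{pmatrix} A & 0 \\ I-F & B_t^T \end{pmatrix}$, and that a variable associated with edge $j$ appears only in the column (or columns) indexed by $j$: specifically $a_{i,j}$ sits in column $j$ of the $A$-block, and $f_{l,j}$ sits in column $j$ of the $I-F$ block. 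So all $\eta$ edges' worth of random variables are distributed across $\eta$ distinct columns of $M_t$, with each such variable living in exactly one column.

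First I would expand $|M_t|$ by the Leibniz formula as a sum over permutations, noting that each term of the determinant picks exactly one entry from each column. Since the random variables associated to edge $j$ all lie in column $j$, each determinant term contains at most one such variable, and hence is linear in that variable; as $j$ ranges over the $\eta$ distinguished edges, each term is a product of at most $\eta$ of these linear factors (one per column), so every monomial in $|M_t|$ has degree at most $\eta$ in the random variables and is squarefree in them. This simultaneously gives the ``maximum degree $\eta$'' and the ``linear in each variable'' statements. For the linearity in each \emph{individual} variable, I would additionally note that two distinct random variables $a_{i,j}$ and $a_{i',j}$ (or $f_{l,j}, f_{l',j}$) attached to the same edge $j$ sit in the same column $j$ but different rows, so no determinant term can contain both — again because a determinant term uses only one entry per column. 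Hence each term, and therefore $|M_t|$, is linear (degree $\le 1$) in each random variable separately.

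The main obstacle — really a bookkeeping point rather than a deep difficulty — is handling edges that carry \emph{both} an $A$-type and an $F$-type random coefficient, or verifying that an edge $j$ contributes only to column $j$ and not elsewhere in $M_t$; this requires carefully tracing the definitions of $A$, $F$, and $B_t$ from Section~\ref{secpre} to confirm that $a_{i,j}$ occurs only in position $(i,j)$ of the $A$-block and $f_{i,j}$ only in position $(i,j)$ of the $F$-block (the $B_t$ variables are not among the randomized ones, so they may be ignored here). Once the ``one variable per column, one column per edge'' picture is established, the degree bound and linearity follow immediately from the single-entry-per-column property of the Leibniz expansion, and no further estimation is needed. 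I would close by remarking that the factor of $\eta$, not $2\eta$, in the degree comes precisely from the fact that the randomness of edge $j$ is confined to a \emph{single} column regardless of whether that edge carries one or several random coefficients.
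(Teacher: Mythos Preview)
Your argument is correct. The observation that every random coefficient attached to edge $j$---whether an $a_{i,j}$ or an $f_{l,j}$---occurs in $M_t$ only in column $j$ is exactly right from the block decomposition in~\eqref{eqMt}, and the Leibniz expansion then forces each determinant term to pick up at most one random factor per distinguished column, hence total degree at most $\eta$ and degree at most one in each variable. (Your ``additionally note'' paragraph is a slight redundancy: linearity in a single variable follows already from the fact that each variable sits in exactly one matrix entry; the same-column-different-row remark is what gives the total degree bound, which you had already argued.)

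This is a genuinely different, and more elementary, route than the one the paper takes. The paper does not prove the lemma directly; it cites \cite[Lem.~3]{ho06} and, as an alternative, derives it as a corollary of Theorem~\ref{thepath}, which identifies every monomial of $|M_t|$ with a product along $h$ edge-disjoint source--sink paths. From that structural picture the degree and linearity claims are immediate: edge-disjointness means each edge, and hence each random coefficient, contributes at most once. Your column-counting argument bypasses the path structure entirely and uses only multilinearity of the determinant in its entries. What Theorem~\ref{thepath} buys over your approach is a precise combinatorial description of \emph{which} monomials can occur, not merely a degree bound; this finer information is what the paper exploits to explain when the bounds~\eqref{eqtrio} and~\eqref{eqtri} outperform~\eqref{eqhosnabel1}. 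For the lemma as stated, however, your argument is the shorter one.
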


\begin{proof}
    See~\cite[Lem.\ 3]{ho06}. Alternatively the proof can be derived as a
    consequence of Theorem~\ref{thepath} in Section~\ref{secpath}.
\end{proof}

Recall, that the polynomial $P$ in~\eqref{eqP} is the product of the
determinants $|M_t|$, $t\in T$. Lemma~\ref{lemho3} therefore implies that the
polynomial $\tilde{P}$ has at most total degree equal to $|T| \eta$ and that no
variable appears in powers of more than $|T|$. The assumption $q > |T|$ implies
$\widehat{P}=\tilde{P}$ which makes it particular easy to see that the same of
course holds for $\widehat{P}$.  Combining this observation with the following
lemma shows that the numbers in~\eqref{eqtrio} and \eqref{eqtri} are both at
least as large as the number~\eqref{eqhosnabel1}.

\begin{lemma}
    Let $\eta, |T|, q \in \mathbf{N}$, $|T| < q$ be some fixed numbers. Let
    $\mu, x_1, \ldots, x_\mu \in \mathbf{N}_0$ satisfy
    \begin{equation*}
        0 \leq x_1 \leq |T|, \ldots, 0 \leq x_\mu \leq |T|
    \end{equation*}
    and $x_1 + \cdots + x_\mu \leq |T|\eta$. The minimal value of
    \begin{equation*}
        \prod_{i=1}^\mu \left( \frac{q-x_i}{q} \right)
    \end{equation*}
    (taken over all possible values of $\mu, x_1, \ldots, x_\mu$) is
    \begin{equation*}
        \left( \frac{q-|T|}{q} \right)^\eta
        .
    \end{equation*}
\end{lemma}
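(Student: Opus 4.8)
The plan is to prove separately that the value $\left(\frac{q-|T|}{q}\right)^\eta$ is attained and that it is a lower bound for all admissible tuples. Attainment is immediate: take $\mu=\eta$ and $x_1=\cdots=x_\eta=|T|$, which satisfies all the constraints (with $x_1+\cdots+x_\mu=|T|\eta$) and makes the product equal to $\left(\frac{q-|T|}{q}\right)^\eta$. Before the lower bound I would dispose of the degenerate cases $|T|=0$ and $\eta=0$, in which every admissible tuple has all $x_i=0$ and the (possibly empty) product equals $1$, agreeing with $\left(\frac{q-|T|}{q}\right)^\eta$; so from then on I may assume $|T|\ge 1$ and $\eta\ge1$, and I note that $0\le x_i\le|T|<q$ forces every factor $\frac{q-x_i}{q}$ to be strictly positive, so all logarithms below are defined.

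The core of the lower bound is a single-variable estimate: for every integer $x$ with $0\le x\le|T|$,
\begin{equation*}
    \frac{q-x}{q}\ \ge\ \left(\frac{q-|T|}{q}\right)^{x/|T|}.
\end{equation*}
To see this I would write $q-x=\frac{x}{|T|}(q-|T|)+\frac{|T|-x}{|T|}\,q$, which displays $q-x$ as a convex combination of the two positive numbers $q-|T|$ and $q$; concavity of $\log$ then gives $\log\frac{q-x}{q}\ge\frac{x}{|T|}\log\frac{q-|T|}{q}$, and exponentiating yields the displayed inequality.

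With this in hand the rest is bookkeeping. Applying the estimate with $x=x_i$ and multiplying over $i=1,\dots,\mu$ gives
\begin{equation*}
    \prod_{i=1}^{\mu}\frac{q-x_i}{q}\ \ge\ \left(\frac{q-|T|}{q}\right)^{(x_1+\cdots+x_\mu)/|T|},
\end{equation*}
and since $0<\frac{q-|T|}{q}\le1$ the right-hand side is non-increasing in its exponent; hence the constraint $x_1+\cdots+x_\mu\le|T|\eta$, i.e.\ $(x_1+\cdots+x_\mu)/|T|\le\eta$, lets me replace the exponent by $\eta$ and conclude that the product is at least $\left(\frac{q-|T|}{q}\right)^\eta$. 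Combined with the attainment observation this identifies the minimum.

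I do not expect a genuine obstacle here; the only thing to be careful about is to resist the naive approach of inducting on $\mu$ or repeatedly merging pairs $x_i,x_j$ with $x_i+x_j\le|T|$, where the global budget $x_1+\cdots+x_\mu\le|T|\eta$ and the per-coordinate cap $x_i\le|T|$ interact awkwardly — the convexity estimate above handles everything one coordinate at a time and makes these complications vanish.
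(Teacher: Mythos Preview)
Your proof is correct, but it takes a different route from the paper. The paper argues by an exchange/swapping step: assuming $(\mu,x_1,\dots,x_\mu)$ realises the minimum (ordered $x_1\ge\cdots\ge x_\mu$), it first notes that the constraint $\sum x_i=|T|\eta$ must be tight, and then checks the elementary inequality $(q-x_i)(q-x_{i+1})>(q-(x_i+1))(q-(x_{i+1}-1))$ whenever $x_i<|T|$ and $x_{i+1}>0$; this forces $x_1=\cdots=x_\eta=|T|$ and the rest zero. Your argument instead bounds each factor separately via the concavity of $\log$ (equivalently, weighted AM--GM), obtaining $\frac{q-x}{q}\ge\bigl(\frac{q-|T|}{q}\bigr)^{x/|T|}$, and then multiplies. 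The paper's approach is entirely elementary (pure integer arithmetic, no logarithms) and gives a direct combinatorial picture of the minimiser; your approach sidesteps any discussion of why a minimiser exists over the infinite family of $\mu$'s, applies verbatim to real-valued $x_i$, and makes the role of the sum constraint $\sum x_i\le|T|\eta$ transparent via the monotonicity of $r\mapsto(\tfrac{q-|T|}{q})^{r}$.
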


\begin{proof}
    Assume $\mu$ and $x_1, \ldots, x_\mu$ are chosen such that the expression
    attains its minimal value. Without loss of generality we may assume that
    \begin{equation*}
        x_1 \geq x_2 \geq \cdots \geq x_\mu
    \end{equation*}
    holds. Clearly, $x_1+\cdots + x_\mu=|T|\eta$ must hold. If $x_i<|T|$ and
    $x_{i+1}>0$ then 
    \begin{equation*}
        (q-x_i)(q-x_{i+1} )> (q-(x_i+1))(q-(x_{i+1}-1))
    \end{equation*}
    which cannot be the case. So $x_1=\cdots =x_{\eta}=|T|$. The remaining
    $x_j$'s if any all equal zero.
\end{proof}

\section{Examples}

In this section we apply the methods from the previous sections to two concrete
networks. We will see that the estimate on the success probability of random
linear network coding that was described in Theorem~\ref{theestimate} can be
considerably better than the estimate described in~\cite[Th.\ 2]{ho06}. Also we
will apply the method from Section~\ref{secfieldsize} to determine the smallest
field of characteristic two for which network coding can be successful.

As random linear network coding is assumed to take place at the nodes in a
decentralized manner, one natural choice is to set $f_{i,j}=1$ whenever the
indegree of the end node of edge $i$ is one and $j$ is the downstream edge
adjacent to $i$. Clearly, if $j$ is not a downstream edge adjacent to $i$ we set
$f_{i,j}=0$. Whenever none of the above is the case we may choose $f_{i,j}$
randomly. Also if there is only one source and the outdegree of the source is
equal to the number of symbols to be send we may enumerate the edges from the
source by the numbers $1, \ldots, h$ and set $a_{i,j}=1$ if $1 \leq i=j \leq h$
and set $a_{i,j}=0$ otherwise. This strategy can be generalized also to deal
with the case of more sources. In the following two examples we will choose the
variables in the manner just described. The network in the first example is
taken from~\cite[Ex.~3.1]{frasol} whereas the network in the second example is
new.

\begin{example}\label{exsidstenye}
    Consider the delay-free and acyclic network in Figure~\ref{figny}. 
    \begin{figure}
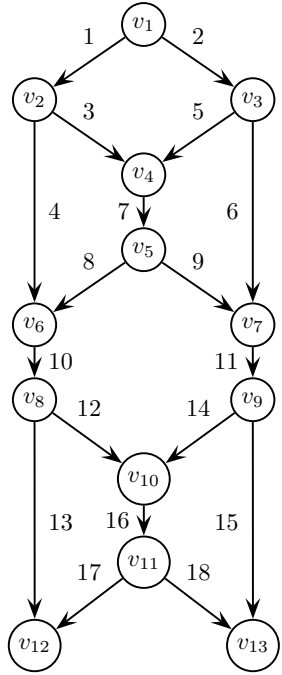

        \begin{center}
            \scalebox{0.909}{
            $
            \psmatrix[colsep=0.8cm,rowsep=.4cm,mnode=circle]
            &v_1\\
            v_2&&v_3\\
            &v_4\\
            &v_5\\
            v_6&&v_7\\
            v_8&&v_9\\
            &v_{10}\\
            &v_{11}\\
            v_{12}&&v_{13}
            \psset{arrowscale=2}
            \ncline{->}{1,2}{2,1}^{1}
            \ncline{->}{1,2}{2,3}^{2}
            \ncline{->}{2,1}{3,2}^{3}
            \ncline{->}{2,1}{5,1}>{4}
            \ncline{->}{2,3}{3,2}^{5}
            \ncline{->}{2,3}{5,3}<{6}
            \ncline{->}{3,2}{4,2}<{7}
            \ncline{->}{4,2}{5,1}^{8}
            \ncline{->}{4,2}{5,3}^{9}
            \ncline{->}{5,1}{6,1}>{10}
            \ncline{->}{5,3}{6,3}<{11}
            \ncline{->}{6,1}{7,2}^{12}
            \ncline{->}{6,1}{9,1}>{13}
            \ncline{->}{6,3}{7,2}^{14}
            \ncline{->}{6,3}{9,3}<{15}
            \ncline{->}{7,2}{8,2}<{16}
            \ncline{->}{8,2}{9,1}^{17}
            \ncline{->}{8,2}{9,3}^{18}
            \endpsmatrix
            $
            }
        \end{center}
        \caption{The network from Example~\ref{exsidstenye}}
        \label{figny}
    \end{figure}
    There is one sender $v_1$ and two receivers $v_{12}$ and $v_{13}$. The
    min-cut max-flow number is two for both receivers so we assume that two
    independent random processes emerge from sender $v_1$.  We consider in this
    example only fields of characteristic $2$. Following the description
    preceding the example we set $a_{1,1}=a_{2,2}=1$ and $a_{i,j}=0$ in all
    other cases. Also we let $f_{i,j}=1$ except 
    \begin{equation*}
        f_{3,7}, f_{5,7}, f_{4,10}, f_{8,10}, f_{9,11}, f_{6,11},
        f_{12,16}, f_{14,16}
    \end{equation*}
    which we choose by random. As in the previous sections we consider
    $b_{t,i,j}$ as fixed but unknown to us. The determinant polynomial becomes
    \begin{equation*}
        \tilde{P}=(b^2c^2e^2gh+c^2f^2gh+a^2d^2f^2gh)Q
        ,
    \end{equation*}
    where 
    \begin{align*}
        a &=f_{3,7}  & b &=f_{5,7}  & c &=f_{4,10}  & d &=f_{8,10} \\
        e &=f_{9,11} & f &=f_{6,11} & g &=f_{12,16} & h &=f_{14,16}
    \end{align*}
    and $Q=|B_{v_{12}}'| \, |B_{v_{13}}'|$. Here, $B_{v_{12}}'$ respectively
    $B_{v_{13}}'$ is the matrix consisting of the nonzero columns of
    $B_{v_{12}}$ respectively the nonzero columns of $B_{v_{14}}$.  Restricting
    to fields $\F_q$ of size at least $4$ we have $\widehat{P}=\tilde{P}$ and we
    can therefore immediately apply the bounds in Theorem~\ref{theestimate}.
    Applying~\eqref{eqtri} we get the following lower bound on the success
    probability
    \begin{equation*}
        P_\text{new 2}(q) = \frac{(q-2)^3(q-1)^2}{q^5}
        .
    \end{equation*}
    Choosing as monomial ordering the lexicographic ordering $\plex$ with 
    \begin{equation*}
        a \plex b \plex d \plex e \plex g \plex h \plex f \plex c
    \end{equation*}
    the leading monomial of $\tilde{P}$ becomes $c^2f^2gh$ and therefore
    from~\eqref{eqtrio} we get the following lower bound on the success
    probability
    \begin{equation*}
        P_\text{new 1}(q) = \frac{(q-2)^2(q-1)^2}{q^4}
        .
    \end{equation*}
    For comparison the bound~\eqref{eqhosnabel1} from~\cite{ho06} states that
    the success probability is at least
    \begin{equation*}
        P_{\text{Ho}}(q)=\frac{(q-2)^4}{q^4}
        .
    \end{equation*}
    We see that $P_\text{new 1}$ exceeds $P_\text{Ho}$ with a factor
    $(q-1)^2/(q-2)^2$, which is larger than 1. Also $P_\text{new 2}$ exceeds
    $P_\text{Ho}$. In Table~\ref{tabeins} we list values of $P_\text{new 1}(q)$,
    $P_\text{new 2}(q)$ and $P_\text{Ho}(q)$ for various choices of $q$. 

    \begin{table}
        \centering
        \caption{From Example~\ref{exsidstenye}: Estimates on the success
        probability} 
        \label{tabeins}
        \begin{tabular}{r ccccc}
            \toprule
            $q$ & 4 & 8 & 16 & 32 & 64
            \\
            $P_\text{new 1}(q)$ & 0.140 & 0.430 & 0.672 & 0.893 & 0.909
            \\
            $P_\text{new 2}(q)$ & $0.703 \times 10^{-1}$ & 0.322 & 0.588 &
            0.773 & 0.880
            \\
            $P_\text{Ho}(q)$  & $0.625 \times 10^{-1}$ & 0.316 & 0.586 & 0.772
            & 0.880
            \\
            \bottomrule
        \end{tabular}
    \end{table}
    We next consider the field $\F_2$. We reduce $\tilde{P}$ modulo $(a^2-a,
    \ldots, h^2-h)$ to get
    \begin{equation*}
        \widehat{P} = (bcegh + cfgh + adfgh) Q
        .
    \end{equation*}
    From~\eqref{eqtri} we see that the success probability of random network
    coding is at least $2^{-5}$. Choosing as monomial ordering the lexicographic
    ordering described above~\eqref{eqtrio} tells us that the success
    probability is at least $2^{-4}$. For comparison the
    bound~\eqref{eqhosnabel1} does not apply as we do not have $q>|T|$. It
    should be mentioned that for delay-free acyclic networks the network coding
    problem is solvable for all choices of $q\geq |T|$ \cite{jaggi03} and
    \cite{sanders03}. From this fact one can only conclude that the success
    probability is at least $2^{-8}$ (8 being the number of coefficients to be
    chosen by random).
\end{example}

\begin{example}\label{exone}
    Consider the network in Figure~\ref{figone}.
    \begin{figure}
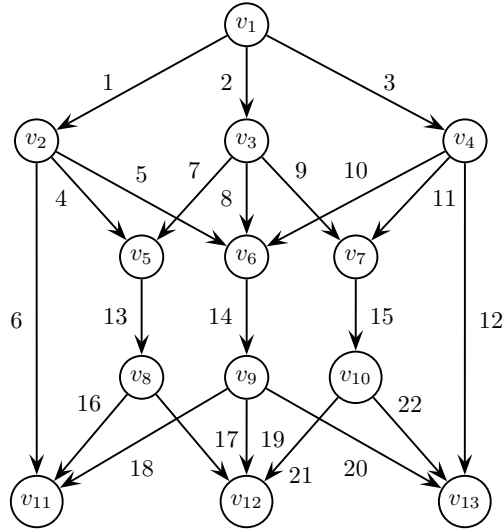

        \begin{center}
            \scalebox{0.909}{
            $
            \psmatrix[colsep=0.8cm,rowsep=1cm,mnode=circle]
            &&v_1\\
            v_2&&v_3&&v_4\\
            &v_5&v_6&v_7\\
            &v_8&v_9&v_{10}\\
            v_{11}&&v_{12}&&v_{13}
            \psset{arrowscale=2} 
            \ncline{->}{1,3}{2,1}<{1}
            \ncline{->}{1,3}{2,3}<{2}
            \ncline{->}{1,3}{2,5}>{3}
            \ncline{->}{2,1}{3,2}<{4}
            \ncline{->}{2,1}{3,3}^{5}
            \ncline{->}{2,1}{5,1}<{6}
            \ncline{->}{2,3}{3,2}^{7}
            \ncline{->}{2,3}{3,3}<{8}
            \ncline{->}{2,3}{3,4}^{9}
            \ncline{->}{2,5}{3,3}^{10}
            \ncline{->}{2,5}{3,4}>{11}
            \ncline{->}{2,5}{5,5}>{12}
            \ncline{->}{3,2}{4,2}<{13}
            \ncline{->}{3,3}{4,3}<{14}
            \ncline{->}{3,4}{4,4}>{15}
            \ncline{->}{4,2}{5,1}^{16}
            \ncline{->}{4,2}{5,3}>{17}
            \ncline{->}{4,3}{5,1}_{18}
            \ncline{->}{4,3}{5,3}>{19}
            \ncline{->}{4,3}{5,5}_{20}
            \ncline{->}{4,4}{5,3}_{21}
            \ncline{->}{4,4}{5,5}^{22}
            \endpsmatrix 
            $
            }
        \end{center}
        \caption{The network from Example~\ref{exone}}
        \label{figone}
    \end{figure}
    The sender $v_1$ generates 3 independent random processes. The vertices
    $v_{11}$, $v_{12}$ and $v_{13}$ are the receivers. We will apply network
    coding over various fields of characteristic two. We start by considering
    random linear network coding over fields of size at least $4$. As $4>|T|=3$
    we know that this can be done successfully.

    We set $a_{1,1}=a_{2,2}=a_{3,3}=1$ and $a_{i,j}=0$ in all other cases.  We
    let $f_{i,j}=1$ except $f_{4,13}, f_{7,13}, f_{5,14}, f_{8,14}, f_{10,14},
    f_{9,15}, f_{11,15}$, which we choose by random. As in the last section we
    consider $b_{t,i,j}$ as fixed but unknown to us. Therefore
    $\tilde{P}=\widehat{P}$ is a polynomial in the seven variables  $f_{4,13},
    f_{7,13}, f_{5,14},f_{8,14},f_{10,14},f_{9,15},f_{11,15}$.  The determinant
    polynomial becomes
    \begin{equation*}
        \widehat{P}=(abcdefg+abce^2f^2+b^2c^2efg)Q
        ,
    \end{equation*}
    where 
    \begin{align*}
        a &= f_{4,13} & b &= f_{5,14}  & c &= f_{7,13}  & d &= f_{8,14}\\
        e &= f_{9,15} & f &= f_{10,14} & g &= f_{11,15}
    \end{align*}
    and $Q=|B_{v_{11}}'| \, |B_{v_{12}}'| \, |B_{v_{13}}'|$. Here, $B_{v_{11}}'$
    respectively $B_{v_{12}}'$ respectively $B_{v_{13}}'$ is the matrix
    consisting of the nonzero columns of $B_{v_{11}}$ respectively the nonzero
    columns of $B_{v_{12}}$ respectively the nonzero columns of $B_{v_{13}}$.
    Choosing a lexicographic ordering with $d$ being larger than the other
    variables and applying~\eqref{eqtrio} we get that the success probability is
    at least 
    \begin{equation*}
        P_\text{new 1}(q)=\frac{(q-1)^7}{q^7}
        .
    \end{equation*}
    Applying~\eqref{eqtri} we see that the success probability is at least 
    \begin{equation*}
        P_\text{new 2}(q)=\frac{(q-1)^3(q-2)^2}{q^5}
        .
    \end{equation*}
    For comparison~\eqref{eqhosnabel1} tells us that success probability is at
    least
    \begin{equation*}
        P_{\text{Ho}}(q)=\frac{(q-3)^3}{q^3}
        .
    \end{equation*}
    Both bound~\eqref{eqtrio} and bound~\eqref{eqtri}
    exceed~\eqref{eqhosnabel1} for all values of $q \geq 4$.  In
    Table~\ref{tab1} we list $P_\text{new 1}(q)$, $P_\text{new 2}(q)$ and
    $P_\text{Ho}(q)$ for various values of $q$.

    \begin{table}
        \centering
        \caption{From Example~\ref{exone}: Estimates on the success
        probability}
        \label{tab1}
        \begin{tabular}{r ccccc}
            \toprule
            $q$ & 4 & 8 & 16 & 32 & 64
            \\
            $P_\text{new 1}(q)$ & 0.133 & 0.392 & 0.636 & 0.800 & 0.895
            \\
            $P_\text{new 2}(q)$ & 0.105 & 0.376 & 0.630 & 0.799 & 0.895
            \\
            $P_\text{Ho}(q)$  & $0.156 \times 10^{-1}$ & 0.244 & 0.536 & 0.744
            & 0.865
            \\
            \bottomrule
        \end{tabular}
    \end{table}

    We next consider the field $\F_2$. We reduce $\tilde{P}$ modulo $(a^2-a,
    \ldots, g^2-g)$ to get
    \begin{equation*}
        \widehat{P} = (abcdefg + abcef + bcefg) Q
        .
    \end{equation*}
    From~\eqref{eqtri} we see that the success probability of random network
    coding is at least $2^{-7}$. Choosing a proper monomial ordering we get
    from~\eqref{eqtrio}  that the success probability is at least $2^{-5}$. For
    comparison neither \cite{ho06}, \cite{jaggi03}, nor \cite{sanders03} tells
    us that linear network coding is possible.
\end{example}

\section{The Topological Meaning of $|M_t|$}\label{secpath}

Recall from Section~\ref{secho} that Ho et al.'s bound~\eqref{eqhosnabel1}
relies on the rather rough Lemma~\ref{lemho3}. The following theorem gives a
much more precise description of which monomials can occur in the support of $P$
and $\tilde{P}$ by explaining exactly which monomials can occur in $|M_t|$.
Thereby the theorem gives some insight into when the bounds \eqref{eqtrio} and
\eqref{eqtri} are much better than the bound~\eqref{eqhosnabel1}. The theorem
states that if $K$ is a monomial in the support of $|M_t|$ then it is the
product of $a_{i,j}$'s, $f_{i,j}$'s and $b_{t,i,j}$'s related to $h$ edge
disjoint paths $P_1, \ldots, P_h$ that originate in the senders and end in
receiver $t$.

\begin{theorem}\label{thepath}
    Consider a delay-free acyclic network.  If $K$ is a monomial in the support
    of the determinant of $M_t$ then it is of the form $K_1\cdots K_h$ where 
    \begin{equation*}
        K_u = a_{u,l^{(u)}_1}f_{l_1^{(u)},l_2^{(u)}}  f_{l_2^{(u)},l_3^{(u)}}
        \cdots f_{l^{(u)}_{s_u-1},l^{(u)}_{s_u}}b_{t,v_u,l^{(u)}_{s_u}}
    \end{equation*}
    for $u=1, \ldots, h$. Here, $\{v_1, \ldots, v_h\} = \{1, \ldots, h\}$ holds
    and $l_1^{(1)}, \ldots,l_h^{(h)}$ respectively $l_{s_1}^{(1)}, \ldots,
    l_{s_h}^{(h)}$ are pairwise different. Further
    \begin{equation*}
        f_{l^{(u_1)}_i,l^{(u_1)}_{i+1}} \neq f_{l^{(u_2)}_j,l^{(u_2)}_{j+1}}
    \end{equation*}
    unless $u_1=u_2$ and $i=j$ hold. In other words $K$ corresponds to a product
    of $h$ edge disjoint paths.
\end{theorem}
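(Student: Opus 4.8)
The plan is to analyze the determinant $|M_t|$ by expanding it as a signed sum over permutations (the Leibniz formula) and showing that each nonzero term corresponds exactly to a system of $h$ edge-disjoint paths from the sources to $t$. Recall the block structure
\[
M_t = \begin{pmatrix} A & 0 \\ I - F & B_t^T \end{pmatrix},
\]
where $A$ is $h \times |E|$, $I-F$ is $|E| \times |E|$, and $B_t^T$ is $|E| \times h$. So $M_t$ is a square matrix of size $(h+|E|)$. Index the rows by $\{1,\dots,h\}$ (the $A$-rows) followed by $\{1,\dots,|E|\}$ (the $(I-F)$-rows), and similarly the columns by $\{1,\dots,|E|\}$ (the $A$/$(I-F)$-columns) followed by $\{1,\dots,h\}$ (the $B_t^T$-columns). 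A nonzero term in the Leibniz expansion picks, for each row, exactly one column entry, all distinct, and the product of these entries must be nonzero.

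The key combinatorial step is to read off what such a selection looks like. Because the top-right block is $0$, each of the $h$ $A$-rows must select its entry from the $A$-block: row $u$ picks some column $l_1^{(u)}$ with $a_{u, l_1^{(u)}} \neq 0$, i.e. $l_1^{(u)}$ is an edge leaving the source that emits symbol $u$; and the $l_1^{(u)}$ are pairwise distinct since they are distinct columns. The remaining $|E| - h$ of the first $|E|$ columns, together with all $h$ of the $B_t^T$-columns, must be covered by the $|E|$ rows of the block $(I-F \mid B_t^T)$. Now the crucial observation is that the diagonal entries of $I-F$ are $1$ (the off-diagonal entries of $I$ vanish and $f_{i,i}=0$ since no edge is adjacent to itself). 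So think of the permutation restricted to the $(I-F)$-rows as a partial matching on edges: if row $i$ selects column $i$ it contributes the harmless factor $1$; if it selects a different column $j$ it contributes $\pm f_{i,j}$, which is nonzero only when $\mathrm{head}(i) = \mathrm{tail}(j)$, i.e. edges $i$ and $j$ are consecutive; and if it selects a $B_t^T$-column it contributes $\pm b_{t, v, i}$, nonzero only when $\mathrm{head}(i) = t$. I would then argue that following the "column selected by" map starting from each $l_1^{(u)}$ produces a chain $l_1^{(u)} \to l_2^{(u)} \to \cdots$ of consecutive edges that must eventually hit a $B_t^T$-column (it cannot loop, since the graph is acyclic and the chain would otherwise have to return, and it cannot terminate at a self-selection already used), giving precisely a path from source to $t$ and the monomial $K_u$ as stated. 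The $h$ $B_t^T$-columns being distinct forces the terminal edges $l_{s_u}^{(u)}$ to be pairwise distinct and the associated second-index values $v_u$ to form a permutation of $\{1,\dots,h\}$. Edge-disjointness of the paths — i.e. that no $f_{i,j}$ appears twice and no edge is the $l_k^{(u)}$ for two different $(u,k)$ — follows because the column-selection map is injective: each edge-column can be selected by at most one row, so each edge appears in at most one path and at most once.

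I expect the main obstacle to be the bookkeeping that shows the chains are genuine simple paths that partition cleanly: one must rule out that a chain cycles back, merges with another chain, or gets "stuck," and one must correctly account for the rows/columns not involved in any chain (these are exactly the self-selections on $I-F$, contributing $1$). The acyclicity hypothesis is what rules out cycles; injectivity of the selection rules out merging; and a counting argument (the $|E|$ $(I-F)$-rows exactly cover the $|E|-h$ leftover edge-columns plus the $h$ $B_t^T$-columns) shows the chains together with the fixed points account for everything. A clean way to organize this is to set up a bijection between nonzero Leibniz terms of $|M_t|$ and tuples $(P_1,\dots,P_h, \sigma)$ consisting of $h$ edge-disjoint source-to-$t$ paths together with the matching $\sigma \in S_h$ of path-endpoints to $B_t^T$-column indices, and then simply observe that the monomial attached to such a term is the claimed $K_1 \cdots K_h$. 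Since the theorem only asserts the form of monomials in the support (not a sign-cancellation-free count), it suffices to show every term has this form; no further analysis of cancellation is needed.
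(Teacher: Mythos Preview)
Your proposal is correct and follows essentially the same route as the paper: expand $|M_t|$ via the Leibniz formula, observe that the $A$-rows must hit edge-columns, then follow the permutation from each starting edge to build a chain that terminates in a $B_t^T$-column, with acyclicity forcing the leftover $(I-F)$-rows to be fixed points. The only cosmetic differences are that the paper works with the row-permuted matrix $N_t=\begin{pmatrix} I+F & B_t^T\\ A & 0\end{pmatrix}$ and isolates the ``no nontrivial cycle through nonzero entries of $I+F$'' step as a standalone technical lemma before the main argument; your sketch absorbs that step into the narrative but identifies it correctly.
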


\begin{proof}
    A proof can be found in the appendix. 
\end{proof}

We illustrate the theorem with an example.

\begin{example}\label{exbutter}
    Consider the butterfly network in Figure~\ref{figurbutter}. 
    \begin{figure}
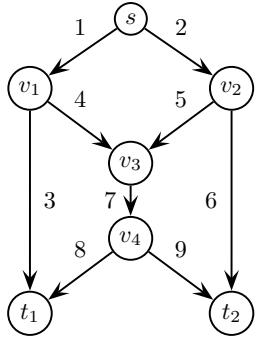

        \begin{center}
            \scalebox{0.909}{
            $
            \psmatrix[colsep=0.8cm,rowsep=0.4cm,mnode=circle]
            &s\\
            v_1&&v_2\\
            &v_3\\
            &v_4\\
            t_1&&t_2
            \psset{arrowscale=2}
            \ncline{->}{1,2}{2,1}^{1}
            \ncline{->}{2,1}{3,2}^{4}
            \ncline{->}{1,2}{2,3}^{2}
            \ncline{->}{2,3}{3,2}^{5}
            \ncline{->}{2,1}{5,1}>{3}
            \ncline{->}{2,3}{5,3}<{6}
            \ncline{->}{3,2}{4,2}<{7}
            \ncline{->}{4,2}{5,1}^{8}
            \ncline{->}{4,2}{5,3}^{9}
            \endpsmatrix 
            $
            }
        \end{center}
        \caption{The butterfly network}
        \label{figurbutter}
    \end{figure}
    A monomial $K$ is in the support of $|M_{t_1}|$ if and only if it is in the
    support of the determinant of 
    \begin{multline*}
        N_{t_1}
        = (n_{i,j})
        = \begin{bmatrix}
            I+F&B_{t_1}^T\\
            A &0
        \end{bmatrix}
        \\
        = \begin{bmatrix}
            1&0&f_{1,3}&f_{1,4}&0&0&0&0&0&0&0\\
            0&1&0&0&f_{2,5}&f_{2,6}&0&0&0&0&0\\
            0&0&1&0&0&0&0&0&0&b_{t_1,1,3}&b_{t_1,2,3}\\
            0&0&0&1&0&0&f_{4,7}&0&0&0&0\\
            0&0&0&0&1&0&f_{5,7}&0&0&0&0\\
            0&0&0&0&0&1&0&0&0&0&0\\
            0&0&0&0&0&0&1&f_{7,8}&f_{7,9}&0&0\\
            0&0&0&0&0&0&0&1&0&b_{t_1,1,8}&b_{t_1,2,8}\\
            0&0&0&0&0&0&0&0&1&0&0\\
            a_{1,1}&a_{1,2}&0&0&0&0&0&0&0&0&0\\
            a_{2,1}&a_{2,2}&0&0&0&0&0&0&0&0&0
        \end{bmatrix}
    \end{multline*}
    By inspection we see that the monomial
    \begin{equation*}
        K = a_{1,1} a_{2,2} b_{t_1,1,3} b_{t_1,2,8} f_{7,8} f_{5,7} f_{2,5}
        f_{1,3}
    \end{equation*}
    is in the support of $|N_{t_1}|$.  We can write $K = K_1 K_2$ where
    \begin{equation*}
        K_1=a_{1,1}f_{1,3}b_{t_1,1,3}
        \quad \text{and} \quad
        K_2=a_{2,2}f_{2,5}f_{5,7}f_{7,8}b_{t_1,2,8}
        .
    \end{equation*}
    This is the description guaranteed by Theorem~\ref{thepath}. To make it
    easier for the reader to follow the proof of Theorem~\ref{thepath} in the
    appendix we now introduce some of the notations to be used there. By
    inspection the monomial $K$ can be written
    \begin{equation*}
        K = \prod_{i=1}^{11} n_{i,p(i)}
    \end{equation*}
    where the permutation $p$ is given by
    \begin{align*}
        p(1) &= 3 & p(2) &= 5  & p(3) &= 10 & p(4)  &= 4 & p(5)  &= 7 & p(6)=6\\
        p(7) &= 8 & p(8) &= 11 & p(9) &= 9  & p(10) &= 1 & p(11) &= 2
    \end{align*}
    Therefore if we index the elements in $\{1, \ldots, 11\}$ by 
    \begin{align*}
        i_1 &= 10 & i_2 &=1 & i_3 &= 3 & i_4    &= 11 & i_5    &= 2 & i_6 &= 5\\
        i_7 &= 7  & i_8 &=8 & i_9 &= 4 & i_{10} &= 6  & i_{11} &= 9
    \end{align*}
    then we can write
    \begin{align*}
        K_1 &= n_{i_1,p(i_1)} n_{i_2,p(i_2)} n_{i_3,p(i_3)}
        \\
        K_2 &= n_{i_4,p(i_4)} n_{i_5,p(i_5)} n_{i_6,p(i_6)} n_{i_7,p(i_7)}
        n_{i_8,p(i_8)}
    \end{align*}
    and we have
    \begin{equation*}
        n_{i_9,p(i_9)}=n_{i_{10},p(i_{10})}=n_{i_{11},p(i_{11})} = 1
    \end{equation*}
    corresponding to the fact $p(i_9)=i_9$, $p(i_{10})=i_{10}$ and
    $p(i_{11})=i_{11}$. 
\end{example}

\begin{remark}
    The procedures described in the proof of Theorem~\ref{thepath} can be
    reversed. This implies that there is a bijective map between the set of edge
    disjoint paths $P_1, \ldots, P_h$ in Theorem~\ref{thepath} and the set of
    monomials in $|M_t|$.
\end{remark}

Theorem~\ref{thepath} immediately applies to the situation of random network
coding if we plug into the $a_{i,j}$'s and into the $f_{t,i,j}$'s on the paths
$P_1, \ldots, P_h$ the fixed values wherever such are given. Let as in
Lemma~\ref{lemho3} $\eta$ be the number of edges for which some coefficients
$a_{i,j}, f_{i,j}$ are to be chosen by random. Considering the determinant as a
polynomial in the variables to be chosen by random with coefficients in the
field of rational expressions in the $b_{t,i,j}$'s we see that no monomial can
contain more than $\eta$ variables and that no variable occurs more than once.
This is because the paths $P_1, \ldots, P_h$ are edge disjoint. Hence,
Lemma~\ref{lemho3} is a consequence of Theorem~\ref{thepath}. 

\section{Acknowledgments}

The authors would like to thank the anonymous referees for their helpful
suggestions.
 
\appendix

\section{Proof of Theorem~\ref{thepath}}

The proof of Theorem~\ref{thepath} calls for the following technical lemma.
\begin{lemma}\label{lemnocycle}
    Consider a delay-free acyclic network with corresponding matrix $F$ as in
    Section~\ref{secpre}. Let $I$ be the $|E| \times |E|$ identity matrix and
    define
    \begin{equation*}
        \Gamma=(\gamma_{i,j})=I+F
        .
    \end{equation*}
    Given a permutation $p$ on $\{1, \ldots, |E|\}$ write 
    \begin{equation*}
        p^{(i)}(\lambda) = \overbrace{p(p(\cdots(\lambda )\cdots ))}^\text{i
        times}
    \end{equation*}
    If for some $\lambda \in \{1, \ldots, |E|\}$ the following hold
    \begin{enumerate}[label=(\arabic*)]
        \item \label{enu:lemma:a}
            $\lambda,p(\lambda), \ldots,p^{(x)}(\lambda)$ are pairwise different
            
        \item \label{enu:lemma:b}
            $p^{(x+1)}(\lambda) \in \{ \lambda, p(\lambda), \ldots,
            p^{(x)}(\lambda) \}$
        \item \label{enu:lemma:c}
            $\gamma_{\lambda,p(\lambda)}, \gamma_{p(\lambda),p(p(\lambda))},
            \ldots, \gamma_{p^{(x)}(\lambda),p^{(x+1)}(\lambda)}$ are all
            nonzero
    \end{enumerate}
    then $x=0$.
\end{lemma}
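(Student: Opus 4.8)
\emph{Proof sketch (plan).} The plan is to argue by contradiction: if $x \geq 1$, then conditions \ref{enu:lemma:a}--\ref{enu:lemma:c} force the existence of a directed cycle in $G$, which is impossible since $G$ is acyclic. Hence $x = 0$.

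The first step is to observe that \ref{enu:lemma:a} and \ref{enu:lemma:b} together actually force $p^{(x+1)}(\lambda) = \lambda$. Indeed, if $p^{(x+1)}(\lambda) = p^{(k)}(\lambda)$ with $1 \le k \le x$, then applying the inverse permutation $p^{-1}$ once yields $p^{(x)}(\lambda) = p^{(k-1)}(\lambda)$ with $0 \le k-1 < x$, contradicting the pairwise distinctness in \ref{enu:lemma:a}; so $k = 0$. Thus, writing $e_i := p^{(i)}(\lambda)$ for $i = 0, \ldots, x$, the edges $e_0, \ldots, e_x$ are pairwise distinct and $p$ cyclically permutes them, i.e.\ $p(e_i) = e_{i+1}$ with the convention $e_{x+1} := e_0$.

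Next I would translate condition \ref{enu:lemma:c} into the language of the graph. Assume for contradiction that $x \ge 1$; then, since $e_0, \ldots, e_x$ are pairwise distinct, each cyclically consecutive pair $(e_i, e_{i+1})$ for $i = 0, \ldots, x$ consists of two \emph{distinct} edges, so $\gamma_{e_i, e_{i+1}}$ is an off-diagonal entry of $\Gamma = I + F$ and therefore equals $f_{e_i, e_{i+1}}$. By the definition of $F$ in Section~\ref{secpre}, this entry is nonzero only if $\mathrm{head}(e_i) = \mathrm{tail}(e_{i+1})$, so \ref{enu:lemma:c} gives $\mathrm{head}(e_i) = \mathrm{tail}(e_{i+1})$ for all $i$. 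To conclude, fix a topological ordering $\tau$ of the vertices of the acyclic graph $G$ (so $\tau(u) < \tau(v)$ whenever $(u,v) \in E$). Following $e_0, e_1, \ldots, e_x$ in turn,
\begin{equation*}
    \tau(\mathrm{tail}(e_0)) < \tau(\mathrm{head}(e_0)) = \tau(\mathrm{tail}(e_1)) < \cdots = \tau(\mathrm{tail}(e_x)) < \tau(\mathrm{head}(e_x)) = \tau(\mathrm{tail}(e_0)),
\end{equation*}
which is absurd. Hence $x = 0$, as claimed.

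The proof is short, and the only delicate point is the opening step: one must use the injectivity of $p$ to upgrade the weak membership statement \ref{enu:lemma:b} to the equality $p^{(x+1)}(\lambda) = \lambda$, and one must notice that the degenerate case in which the ``cycle'' collapses to a single edge — making $\gamma_{e_x, e_0}$ the diagonal entry $1$ rather than an $f$-variable, and hence carrying no adjacency information — is precisely the excluded case $x = 0$. After that, everything is a routine dictionary between nonzero off-diagonal entries of $\Gamma$ and edge adjacencies in $G$, closed off by the standard fact that a finite acyclic digraph admits no closed directed walk of positive length.
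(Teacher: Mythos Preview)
Your proposal is correct and follows essentially the same approach as the paper's proof: both first use injectivity of $p$ to upgrade condition~\ref{enu:lemma:b} to $p^{(x+1)}(\lambda)=\lambda$, then assume $x>0$ to force all the $\gamma$'s to be off-diagonal $f$-entries, and conclude by observing that this yields a directed cycle in the acyclic graph. Your version is slightly more explicit (spelling out the topological-ordering contradiction rather than simply invoking acyclicity), but the argument is the same.
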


\begin{proof}
    Let $p$ be a permutation and let $x$ and $\lambda$ be numbers such that
    \ref{enu:lemma:a}, \ref{enu:lemma:b} and \ref{enu:lemma:c} hold. As $p$ is a
    permutation then \ref{enu:lemma:a} and \ref{enu:lemma:b} implies that
    $p(p^{(x)}(\lambda))=\lambda$. Aiming for a contradiction assume $x>0$. As
    $p(\eta)=\eta$ does not hold for any $\eta \in \{\lambda, p(\lambda),
    \ldots, p^{(x)}(\lambda)\}$,
    \begin{equation*}
        \gamma_{\lambda,p(\lambda)}, \gamma_{p(\lambda),p^{(2)}(\lambda)},
        \ldots, \gamma_{p^{(x)}(\lambda),p^{(x+1)}(\lambda)}
    \end{equation*}
    are all non-diagonal elements in $I+F$. By \ref{enu:lemma:c} we therefore
    have constructed a cycle in a cycle-free graph and the assumption $x>0$
    cannot be true.
\end{proof}

\begin{proof}[of Theorem~\ref{thepath}]
    A monomial is in the support of the determinant of $M_t$ if and only if it
    is in the support of the determinant of
    \begin{equation*}
        N_t =
        \begin{pmatrix}
            I+F&B_t^T\\
            A&0
        \end{pmatrix}
        = (n_{i,j})
        .
    \end{equation*}
    To ease the notation in the present proof we consider the latter matrix.
    Let $p$ be a permutation on $\{1, \ldots, |E|+h\}$ such that 
    \begin{equation}
        \label{diamond1}
        \prod_{s=1}^{|E|+h}n_{s,p(s)} \neq 0
        .
    \end{equation} 
    Below we order the elements in $\{1, \ldots, |E|+h\}$ in a particular way by
    indexing them $i_1, \ldots, i_{|E|+h}$ according to the following set of
    procedures.

    Let $i_1=|E|+1$  and define recursively 
    \begin{equation*}
        i_s=p(i_{s-1})
    \end{equation*}
    until $|E| <p(i_s) \leq |E|+h$. Note that this must eventually happen due to
    Lemma~\ref{lemnocycle}.  Let $s_1$ be the (smallest) number such that $|E| <
    p(i_{s_1}) \leq |E|+h$ holds. This corresponds to saying that $n_{i_1,
    p(i_1)}$ is an entry in $A$, that $n_{i_2, p(i_2)},$ $\ldots, n_{i_{s_1-1},
    p(i_{s_1-1})}$ are entries in $I+F$, and that $n_{i_{s_1},p(i_{s_1})}$ is an
    entry in $B_t^T$.  Observe, that $p(i_r)=i_r$ cannot happen for $2 \leq r
    \leq s_1$ as already $p(i_{r-1})=i_r$ holds. As $n_{i_r,p(i_r)}$ is non-zero
    by~\eqref{diamond1} we therefore must have 
    \begin{equation*}
        n_{i_r,p(i_r)}=f_{i_{r},p(i_r)}=f_{i_{r},i_{r+1}}
    \end{equation*}
    for $2\leq r < s_1$.  Hence,
    \begin{equation*}
        (n_{i_1,p(i_1)}, \ldots, n_{i_{s_1},p(i_{s_1})})
        = (a_{1,i_2},f_{i_2,i_3}, \ldots, f_{i_{s_1-1}, i_{s_1}},
        b_{t,v_1,i_{s_1}})
    \end{equation*}
    for some $v_1$. Denote this sequence by $P_1$. Clearly, $P_1$ corresponds to
    the polynomial $K_1$ in the theorem.

    We next apply the same procedure as above starting with $i_{s_1+1}=|E|+2$ to
    get a sequence $P_2$ of length $s_2$. Then we do the same with
    $i_{s_1+s_2+1}=|E|+3, \ldots, i_{s_1+\cdots s_{h-1}+1}=|E|+h$ to get the
    sequences $P_3, \ldots, P_h$. For $u=2, \ldots, h$ we have
    \begin{multline*}
        P_u
        = \left(
            n_{i_{s_1+\cdots +s_{u-1}+1}, p(i_{s_1+\cdots +s_{u-1}+1})}, \ldots
            , n_{i_{s_1+\cdots +s_{u}}, p(i_{s_1+\cdots +s_{u}})}
        \right)
        \\
        = \left(
            a_{u,i_{s_1+ \cdots +s_{u-1}+2}},
            f_{i_{s_1+\cdots +s_{u-1}+2}, i_{s_1+\cdots +s_{u-1}+3}}, \ldots,
            \right.
        \\
            \left.
            \phantom{a_{i_{s_{u-1}}}} 
            f_{i_{s_1+\cdots +s_{u}-1}, i_{s_1+\cdots +s_{u}}},
            b_{t,v_u,i_{s_1+\cdots +s_{u}}}
        \right)
        .
    \end{multline*}
    Clearly, $P_u$ corresponds to $K_u$ in the theorem. Note that the sequences
    $P_1, \ldots, P_h$ by the very definition of a permutation are edge disjoint
    in the sense that 
    \begin{enumerate}[label=(\arabic*)]
        \item $n_{i,j}$ occurs at most once in $P_1, \ldots, P_h$,
        \item if $n_{j,l_1}, n_{j,l_2}$ occur in $P_1, \ldots, P_h$ then
            $l_1=l_2$,
        \item if $n_{j_1,l}, n_{j_2,l}$ occur in $P_1, \ldots, P_h$ then
            $j_1=j_2$.
    \end{enumerate}
    Having indexed $s_1+\cdots + s_h$ of the integers in $\{1, \ldots, |E|+h\}$
    we consider what is left, namely
    \begin{equation*}
        \Lambda=\{ 1, \ldots, |E|+h\} \setminus \{i_1, \ldots, i_{s_1+\ldots
        +s_h}\}
        .
    \end{equation*}
    By construction we have $i_1=|E|+1$, $\ldots, i_{s_1+\cdots
    +s_{h-1}+1}=|E|+h$ and therefore $\Lambda \subseteq \{1, \ldots, |E|\}$.
    Also by construction for every 
    \begin{equation*}
        \delta \in \{ 1, \ldots, |E| \} \cap \{ i_1, \ldots, i_{s_1+\cdots
        +s_h} \}
    \end{equation*}
    we have $\delta=p(\epsilon)$ for some $\epsilon \in \{i_1, \ldots, i_{s_1+
    \cdots +s_h} \}$. Therefore $p(\lambda) \in \Lambda$ for all $\lambda \in
    \Lambda$ holds. In particular $p^{(x)}(\lambda) \in \{1, \ldots, |E|\}$ for
    all $x$. From Lemma~\ref{lemnocycle} we conclude that $p(\lambda)=\lambda$
    for all $\lambda \in \Lambda$.
\end{proof}

\end{document}